\theoremstyle{plain}
\newtheorem{lemma}{Lemma}[section]
\newtheorem{proposition}[lemma]{Proposition}
\newtheorem{thm}[lemma]{Theorem}
\theoremstyle{remark}
\newtheorem{remark}[lemma]{Remark}
\theoremstyle{definition}
\newtheorem{definition}[lemma]{Definition}
\def\R{\mathbb{R}}
\newcommand{\scp}{\mbox{$\ \searrow \! \! \! \! \searrow \! \! \! \! \ \ \ $}}
\newcommand{\esc}{\mbox{$\ \searrow \! \! \! \! \searrow \! \! \! \! ^e \ $}}
\newcommand{\nd}{N_{\delta}}
\begin{document}

\title[$\delta$-cores and TDA]{\bm{$\delta$}-core subsampling, strong collapses and TDA}
\author[E.G. Minian]{El\'\i as Gabriel Minian}

\address{Departamento  de Matem\'atica-IMAS\\
 FCEyN, Universidad de Buenos Aires\\ Buenos
Aires, Argentina.}

\email{gminian@dm.uba.ar}

\thanks{Researcher of CONICET. Partially supported by grant UBACYT 20020190100099BA}

\begin{abstract}
We introduce a subsampling method for topological data analysis based on strong collapses of simplicial complexes. Given a point cloud and a scale parameter $\delta$, we construct a subsampling that preserves both global and local topological features while significantly reducing computational complexity of persistent homology calculations. We illustrate the effectiveness of our approach through experiments on synthetic and real datasets, showing improved persistence approximations compared to other subsampling techniques.

\end{abstract}

\subjclass[2020]{55N31, 62R40, 68U05.}

\keywords{Topological data analysis, Strong homotopy types, Simplicial Complexes, Filtrations, Persistent homology, Subsampling.}

\maketitle

\section{Introduction}

Topological Data Analysis (TDA) applies methods and tools from algebraic topology to analyze datasets. One of the most widely used techniques in this area is persistent homology, which aims to infer the homology groups of an underlying (unknown) space from a sample $X$ by examining the data at all possible scales \cite{C09,EH08,ELZ02,G14,OP17,RB19,Z09,ZC05}.
Currently, TDA and persistent homology have found diverse applications in many scientific fields including medicine, biology, engineering, physics, chemistry, and industry \cite{C09, RB19, RF24}. One important consideration is the computational complexity of persistent homology: the standard matrix reduction algorithm has complexity $O(s^3)$, where $s$ is the number of simplices in the Vietoris-Rips filtration \cite{EH08, B21, M05}. When analyzing a dataset of several thousand points, the number of simplices in the filtration used for computing persistent homology can reach several million if we wish, for example, to compute homology up to degree 2. It is therefore necessary to have effective subsampling methods that preserve the topology of the data. Some known methods exist in this direction, such as farthest point sampling (FPS) \cite{dT04}, landmark selection with witness complexes \cite{CdS04}, graph sparsification methods \cite{S13} and, more recently, adaptive approximation techniques \cite{HJV24} and outlier-robust subsampling techniques \cite{S23}.

In this article we propose a new subsampling method based on the concept of \textit{strong collapse}, a reduction technique in the context of simplicial complexes that we introduced some years ago in collaboration with J. Barmak \cite{BM11}. Strong collapses allow the study of the topology of polyhedra through minimal subpolyhedra (the \textit{cores}) that preserve the homotopy type of the original complex. A strong collapse removes a dominated vertex from the simplicial complex that is, in some sense, redundant, along with all simplices containing it (see Section \ref{sectionstrong} below). The core of a simplicial complex is obtained by iteratively removing dominated vertices until reaching a space with no dominated vertices. This space preserves the (strong) homotopy type of the original space but generally contains far fewer simplices. In the particular case of Vietoris-Rips complexes (used for persistent homology computations), the reduction in simplices can in some cases exceed 80\%. Table~\ref{tab:reduction} shows the simplex reduction when computing the cores of the $3$-skeleta of the Vietoris-Rips complexes at different scales $\varepsilon$ for a sample of 500 points randomly chosen on the sphere $S^2$ in $\mathbb{R}^3$ (with Euclidean distance).

\begin{table}[h]
	\centering
	
	\begin{tabular}{rrrrl}
		\hline
		\# & $\varepsilon$\ \ \ \ \  & $\ \ |\mathrm{VR}(X,\varepsilon)|$ & $\ \ \ |\mathrm{core}|$ & Reduction \\
		\hline
		1  & 0.000000 & 700      & 700      & 0.0\%  \\
		2  & 0.042857 & 721      & 681      & 5.5\%  \\
		3  & 0.085713 & 869      & 563      & 35.2\% \\
		4  & 0.128570 & 1303     & 351      & 73.1\% \\
		5  & 0.171427 & 2366     & 248      & 89.5\% \\
		6  & 0.214283 & 5201     & 1247     & 76.0\% \\
		7  & 0.257140 & 11598    & 3745     & 67.7\% \\
		8  & 0.299997 & 27280    & 9945     & 63.5\% \\
		9  & 0.342853 & 62047    & 23775    & 61.7\% \\
		10 & 0.385710 & 126608   & 52750    & 58.3\% \\
		11 & 0.428567 & 241405   & 94834    & 60.7\% \\
		12 & 0.471423 & 431876   & 205132   & 52.5\% \\
		13 & 0.514280 & 750750   & 291772   & 61.1\% \\
		14 & 0.557137 & 1228758  & 415724   & 66.2\% \\
		15 & 0.599994 & 1939426  & 669756   & 65.5\% \\
		\hline
		\multicolumn{2}{r}{Total:} & 4830908 & 1771223 & 63.3\% \\
		\hline
	\end{tabular}
	
	\vspace{6pt}
	\caption{Reduction of simplices by core computation for Vietoris-Rips complexes on a sample of 500 points on $S^2 \subset \mathbb{R}^3$ at various filtration values. The third column shows the number of simplices in the $3$-skeleton of $\mathrm{VR}(X,\varepsilon)$, the fourth column shows the number of simplices in the core, and the fifth column shows the percentage reduction.}
	\label{tab:reduction}
\end{table}

A problem that arises when attempting to replace $\mathrm{VR}(X,\varepsilon)$ with its core is that an inclusion of subcomplexes does not induce an inclusion of their cores. As a simple example, consider the complex $K$ consisting of a filled rectangle obtained by gluing two triangles along one of their edges, and let $L$ be the boundary of the rectangle. Clearly, the core of $L$ is $L$ itself, whereas the core of $K$ is a point. Therefore, if we apply the core construction to every complex in a filtration, we do not obtain a filtration. To circumvent this problem,  Boissonnat and Pritam \cite{BP20} defined a construction that turns the tower of cores into a filtration. But this also has a computational cost, so they approximate the PH by taking cores at some scale values (see \cite{BP20}). Other applications in data analysis based on strong collapses  can be found in \cite{WM13, WC13, WCK14}. 

Here we propose a different approach: we use the notion of strong collapse to define a subsampling technique. The advantage of this subsampling method over other alternatives is that it preserves both the global topology of the sample and its local structure: the underlying idea is to remove redundant points that are dominated by other points. This technique adapts equally well to samples with high concentration as to more sparse or heterogeneous samples (regions with varying density). The method has the following key properties: if a point $x$ is $\delta$-dominated by a point $y$, then every $\delta$-neighbor of $x$ is also a neighbor of $y$, ensuring preservation of local information. The reduction is adaptive to density since, by construction, it removes many points in dense regions while preserving structure in sparse regions.

 We will see that, given $\delta>0$, the $\delta$-core is unique up to equivalence (see Section \ref{sectiondeltacore} below). In particular, any two $\delta$-cores (for the same parameter $\delta$) have the same number of points and share similar structures. The complexity of the construction algorithm is $O(n\log n + n\bar{k})$ for computing neighborhoods using spatial data structures in low-dimensional Euclidean spaces, plus $O(nk_{\max}^2)$ per iteration for removing dominated vertices, where $n$ is the number of points, $\bar{k}$ is the average $\delta$-degree and $k_{\max}$ is the maximum $\delta$-degree. In practice, the empirical running time is close to $O(n\log n)$ for geometric point clouds in low dimensions. This preprocessing step substantially reduces the number of simplices in the Vietoris-Rips complex—often by millions when computing homology in degree greater than 1 for datasets with more than 1000--2000 points—thereby enabling significantly more efficient persistent homology computation. The construction algorithm and its pseudocode can be found in Section~\ref{sectiondeltacore}. In Section~\ref{sectionexperiments}, we present experimental results on both synthetic and real datasets, comparing the $\delta$-core method with the original sample and with other established subsampling techniques.

The scripts used to implement the $\delta$-core algorithm, generate the experimental results, and produce the figures and tables in this paper are publicly available at
\url{https://github.com/gabrielminian-dm/Delta-core-subsampling-strong-collapses-and-TDA}.

\section{Strong collapses and cores of simplicial complexes}\label{sectionstrong}

We recall first the basic constructions and results on strong collapses of simplicial complexes. For more details on strong homotopy types, including the relationship between cores and nerves of complexes, and applications to topology, combinatorics and the Evasiveness conjecture, the reader may consult \cite{BM11}. All the simplicial complexes that we deal with are assumed to be finite.

Let $K$ be a simplicial complex and let $\sigma$ be a simplex in $K$. Recall that the closed star of $\sigma$, denoted by $st(\sigma)$, is the subcomplex of simplices $\tau$ such that $\tau \cup \sigma$ is a simplex of $K$. The \textit{link} $lk(\sigma)$ is the subcomplex of $st(\sigma)$ of simplices disjoint from $\sigma$.  We denote by $K\smallsetminus v$ the \textit{deletion} of the vertex $v$, which is the full subcomplex of $K$ spanned by the vertices different from $v$. Note that  $K\smallsetminus v$ is obtained from $K$ by removing all the simplices containing $v$.

\begin{definition}
	Let $K$ be a simplicial complex and let $v\in K$ be a vertex. We say that there is an \textit{elementary strong collapse} from $K$ to $K\smallsetminus v$ if $lk(v)$ is a simplicial cone $v'L$ with apex $v'$. In this case we say that $v$ is \textit{dominated} (by $v'$) and we denote $K \esc K\smallsetminus v$. 
\end{definition}

\begin{remark}
	Note that a vertex $v$ is dominated by a vertex $v'\neq v$ if and only if every maximal simplex that contains $v$ also contains $v'$.
\end{remark}
	
	Figure \ref{fig:sc} shows an elementary strong collapse. The vertex $v$ is dominated by $v'$.
	
	\begin{figure}[htbp]
		\centering
		\includegraphics[scale=0.6]{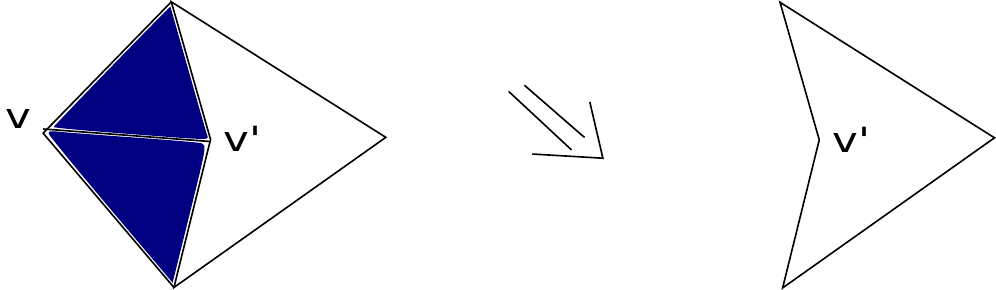}
		\caption{An elementary strong collapse.}
		\label{fig:sc}
	\end{figure}
	
	We say that there is a \textit{strong collapse} from a simplicial complex $K$ to a subcomplex $L$ if there exists a sequence of elementary strong collapses that starts in $K$ and ends in $L$. We write $K \scp L$.
	
\begin{remark}	
	If $v$ is dominated by $v'$, the map $r:K\to K\smallsetminus v$ which is the identity in $ K\smallsetminus v$ and such that $r(v)=v'$ is a simplicial map. Moreover, it induces a strong deformation retract. 
\end{remark}

\begin{proposition}
	If $K\scp L$ then $K$ simplicially collapses to $L$ and, in particular, the inclusion $i: L \hookrightarrow K$ is a strong deformation retract. A simplicial retraction $r:K\to L$ which is a homotopy inverse to $i$ is defined by iterating the retraction maps of the previous remark.
	\end{proposition}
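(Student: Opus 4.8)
The plan is to reduce everything to the case of a single elementary strong collapse and then compose. So I would first fix a vertex $v$ of $K$ dominated by $v'$, write $lk(v)=v'L$ with apex $v'$ and base $L$ (so that $v'\notin L$), and record the identification $st(v)=v*lk(v)=v*v'*L$, which is immediate from the definitions of closed star and link. The point of this is that the simplices of $K$ that are \emph{not} in $K\smallsetminus v$ — precisely those containing $v$ — split into the two disjoint families $\{v\}\cup\rho$ and $\{v,v'\}\cup\rho$, with $\rho$ ranging over all faces of $L$ (the empty face included).

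The heart of the argument is then to realize the pairing $\{v\}\cup\rho \longleftrightarrow \{v,v'\}\cup\rho$ as a sequence of elementary simplicial collapses. I would process the pairs in order of decreasing $\dim\rho$ (ties broken arbitrarily) and check that when the pair indexed by $\rho$ is reached, $\{v\}\cup\rho$ is a free face of $\{v,v'\}\cup\rho$ in the complex that remains. Indeed, a simplex of $K$ properly containing $\{v\}\cup\rho$ has the form $\{v\}\cup\rho'$ with $\rho\subsetneq\rho'$ a face of $L$, or $\{v,v'\}\cup\rho'$ with $\rho\subseteq\rho'$ a face of $L$; every such simplex other than $\{v,v'\}\cup\rho$ itself involves a $\rho'$ of strictly larger dimension and has therefore already been removed. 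Hence each step is a legitimate elementary collapse, and once all faces $\rho$ of $L$ are exhausted every simplex containing $v$ has been deleted, i.e. $K$ simplicially collapses to $K\smallsetminus v$; the degenerate case $L=\emptyset$ (where $lk(v)$ is the single vertex $v'$, $st(v)$ the edge $vv'$, and $\{v\}$ a free face of it) is just the base step of this procedure.

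Concatenating these collapses along a defining sequence $K \esc \cdots \esc L$ of elementary strong collapses gives $K\searrow L$, and the ``in particular'' clause follows because a simplicial collapse induces a strong deformation retract of the underlying polyhedra. To obtain the explicit retraction, I would instead compose the simplicial maps $r$ of the preceding Remark, one for each elementary strong collapse in the sequence: a composition of simplicial maps is simplicial, a composition of strong deformation retractions is again a strong deformation retraction, and the resulting $r\colon K\to L$ is a homotopy inverse of $i$.

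I expect the only real obstacle to be the bookkeeping in the second paragraph — choosing the processing order so that the free-face condition genuinely holds at every stage, describing precisely which simplices lie above a given $\{v\}\cup\rho$ in the partially reduced complex, and keeping track of the extreme cases $\rho=\emptyset$ and $L=\emptyset$. Everything else is formal manipulation of the definitions of star, link, deletion, and (elementary) collapse.
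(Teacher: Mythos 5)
Your argument is correct. The paper itself does not prove this proposition (it is recalled from \cite{BM11}, with the retraction obtained by iterating the preceding remark), so the substantive content is exactly what you supply: that a single elementary strong collapse $K\esc K\smallsetminus v$ is a simplicial collapse. Your explicit free-face pairing $\{v\}\cup\rho \leftrightarrow \{v,v'\}\cup\rho$ over the faces $\rho$ of the cone base, processed in decreasing dimension, is a valid and self-contained way to see this, and your verification that at each stage $\{v,v'\}\cup\rho$ is the unique remaining coface of $\{v\}\cup\rho$ is the right check; it is essentially the classical argument (a vertex whose link is a cone can be removed by collapses), which in \cite{BM11} is packaged instead via standard collapsibility properties of cones. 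The concluding composition of the retractions from the remark to get the simplicial homotopy inverse is exactly what the statement intends. One cosmetic point: you reuse the letter $L$ both for the base of the cone $lk(v)=v'L$ and for the subcomplex $L$ in the statement $K\scp L$; rename one of them to avoid confusion.
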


We say that a simplicial complex $K$ is minimal if it has no dominated vertices. A \textit{core} of a simplicial complex $K$ is a minimal subcomplex $L\subseteq K$. Note that one can obtain a core of $K$ by removing dominated vertices one by one. The following result, proved in \cite{BM11}, asserts that the ordering in which we remove the dominated vertices is irrelevant since the core is unique up to isomorphism.

\begin{thm}\label{unique}
	Every simplicial complex has a core and it is unique up to isomorphism.
	\end{thm}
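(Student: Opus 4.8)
The plan is to dispatch existence by a trivial termination argument and uniqueness by induction on the number $n$ of vertices of $K$, reducing the whole problem to a single ``local confluence up to isomorphism'' statement. For existence: $K$ being finite, if it is not already minimal, remove a dominated vertex $v$ and pass to $K\smallsetminus v$, which has $n-1$ vertices and satisfies $K\esc K\smallsetminus v$; iterating must stop, necessarily at a minimal subcomplex, i.e.\ a core. For uniqueness, if $K$ is minimal it is its own unique core and we are done; otherwise let $L_1,L_2$ be two cores and write the first move of each reducing sequence as $K\esc K\smallsetminus v_1\scp L_1$ and $K\esc K\smallsetminus v_2\scp L_2$. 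If $v_1=v_2$, then $L_1$ and $L_2$ are both cores of $K\smallsetminus v_1$, which has fewer vertices, so the inductive hypothesis gives $L_1\cong L_2$. The entire difficulty is the case $v_1\neq v_2$.

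That case is governed by the following claim: if $v\neq w$ are both dominated in $K$, then either (a) $w$ is dominated in $K\smallsetminus v$ and $v$ is dominated in $K\smallsetminus w$, or (b) $v$ and $w$ are mutually dominated (each lies in exactly the maximal simplices that contain the other) and the bijection sending $w$ to $v$ and fixing every other vertex is a simplicial isomorphism $K\smallsetminus v\cong K\smallsetminus w$. To prove it I would first establish the generic sub-case: if $w$ is dominated in $K$ by some $w'\neq v$, then $w'$ still dominates $w$ in $K\smallsetminus v$ — any maximal simplex $\tau$ of $K\smallsetminus v$ containing $w$ lies in a maximal simplex $\mu$ of $K$ with $w,w'\in\mu$, and whether or not $v\in\mu$, maximality forces $\tau$ to equal $\mu$ (if $v\notin\mu$) or $\mu\smallsetminus v$ (if $v\in\mu$), so $w'\in\tau$. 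This argument only fails when $v$ is the unique vertex dominating $w$; in that situation, choosing any $v'\neq v$ dominating $v$, every maximal simplex containing $w$ contains $v$ and hence also $v'$, so if $v'\neq w$ then $v'$ would be a second vertex dominating $w$, against uniqueness — therefore $v'=w$, i.e.\ $w$ dominates $v$. Combined with ``$v$ dominates $w$'' this is mutual domination, and then a simplex of $K\smallsetminus v$ containing $w$ lies in a maximal simplex of $K$ that also contains $v$, so replacing $w$ by $v$ in it produces a simplex of $K\smallsetminus w$ (and symmetrically); this shows the swap is an isomorphism, giving (b). Applying the generic sub-case to both $v$ and $w$ yields (a) unless one of them is the unique dominator of the other, in which case we are in (b).

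Granting the claim, the induction closes. In case (a), both $K\smallsetminus v_1$ and $K\smallsetminus v_2$ strong collapse, by one further elementary strong collapse, to $K\smallsetminus\{v_1,v_2\}$; fixing a core $M$ of the latter, $M$ is simultaneously a core of $K\smallsetminus v_1$ and of $K\smallsetminus v_2$, so the inductive hypothesis gives $L_1\cong M\cong L_2$. In case (b), the isomorphism $K\smallsetminus v_1\cong K\smallsetminus v_2$ carries $L_1$ to a core of $K\smallsetminus v_2$ isomorphic to $L_1$, and the inductive hypothesis applied to $K\smallsetminus v_2$ identifies this core with $L_2$ up to isomorphism; either way $L_1\cong L_2$.

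The step I expect to be the main obstacle is the claim, and specifically the realization that the naive confluence is false: it is \emph{not} true that two one-step strong collapses $K\smallsetminus v$ and $K\smallsetminus w$ always admit $K\smallsetminus\{v,w\}$ as a common further collapse — already when $K$ is a single edge, deleting either endpoint yields a single vertex, which is minimal and does not collapse to the void complex. Hence the heart of the proof is detecting the ``twin'' configuration of mutually dominated vertices and replacing the missing common lower bound by an explicit isomorphism; the remaining ingredients (the termination count, the re-verification of maximality after a deletion, and the induction bookkeeping) are routine.
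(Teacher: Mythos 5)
Your argument is correct, and it is genuinely different from the source of this theorem: the paper does not prove it but quotes it from \cite{BM11}, where uniqueness is obtained from the machinery of strong homotopy types --- one shows that strong collapses are strong equivalences (in the sense of contiguity classes), proves the rigidity lemma that any self-map of a minimal complex contiguous to the identity is the identity, and deduces that strongly equivalent minimal complexes are isomorphic; since two cores of $K$ are both strong deformation retracts of $K$, they are strongly equivalent and hence isomorphic. You instead run a purely combinatorial diamond-lemma induction on the number of vertices, whose heart is the local confluence statement ``up to isomorphism'': if $v\neq w$ are both dominated, then either both deletions can be continued to the common complex $K\smallsetminus\{v,w\}$, or $v$ and $w$ are mutually dominated twins and the transposition gives an isomorphism $K\smallsetminus v\cong K\smallsetminus w$. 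Your verification of the generic sub-case (a maximal simplex of $K\smallsetminus v$ is either a maximal simplex of $K$ or one with $v$ removed, so a dominator $w'\neq v$ of $w$ survives the deletion), your detection of the twin configuration when the only dominator of one vertex is the other, and the swap isomorphism are all sound, and your single-edge example correctly shows that naive confluence fails, so the twin case cannot be avoided. The trade-off: your route is elementary and self-contained, needing nothing beyond the maximal-simplex characterization of domination; the argument of \cite{BM11} requires the contiguity formalism but proves more --- the core is an invariant of the whole strong homotopy type (any two strongly equivalent complexes have isomorphic cores), which is what makes cores classify strong homotopy types, whereas your induction only compares cores of one fixed complex. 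Both establish the theorem as stated.
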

	
\subsection*{Cores of flag complexes}

The most commonly used simplicial complexes in persistent homology are Vietoris-Rips complexes. Given a point cloud (finite metric space) $(X,d)$ the Vietoris-Rips complex for a parameter $\varepsilon>0$ is the simplicial complex $\mathrm{VR}(X,\varepsilon)$ with vertex set $X$ and whose simplices are the finite sets $\sigma=\{x_0,\ldots,x_r\}$ such that $d(x_i,x_j)\leq \varepsilon$ for all $0\leq i,j\leq r$. By construction, Vietoris-Rips complexes are \textit{flag} complexes. Recall that a simplicial complex $K$ is flag if it satisfies the following condition: every finite subset of vertices $\sigma=\{v_o,\ldots,v_r\}$ is a simplex of $K$ if and only if every pair of vertices in $\sigma$ is a simplex (edge) of $K$. Equivalently, $K$ is flag if it is the clique complex of its $1$-skeleton (viewed as a graph).

\begin{remark}\label{domflag}
It is easy to see that, if $K$ is a flag complex, a vertex $v$ is dominated by a vertex $v'$ if and only if for every vertex $w$ such that $\{v,w\}\in K$, we have $\{v',w\}\in K$. Equivalently, $v$ is dominated by $v'$ if and only if the closed neighborhood $N[v]$ of $v$ in the $1$-skeleton of $K$ (viewed as a graph) is contained in the closed neighborhood $N[v']$ of $v'$. Note also that, in this case, $K\smallsetminus v$ is also flag, which implies that the cores of flag complexes are also flag.
\end{remark}

\section{Persistent homology}
We recall very briefly the basic notions on persitent homology and refer the reader to the excellent books \cite{EH08,G14,RB19,BCY18,Z09} for more details. We use simplicial homology over a field $\mathbb{F}$. A filtration is a nested sequence of simplicial complexes $K_0\subseteq K_1\subseteq K_2\subseteq\ldots\subseteq K_m$. For every degree $q\geq0$, if we apply the homology functor $H_q$, we obtain a sequence of $\mathbb{F}$-vector spaces 
$$H_q(K_0)\to H_q(K_1)\to H_q(K_2)\to\ldots\to H_q(K_m)$$
which is called a persistence module. For every $i<j$ there is a linear map $f_{ij}:H_q(K_i)\to H_q(K_j)$ induced by the inclusion $K_i\subseteq K_j$. Any (finite) persistence module can be decomposed as a multiset of intervals of the form $[i,j)$. The intervals can be viewed as points $(i,j) \in \R\times (\R\cup\{\infty\})$. This multiset is called the persistence diagram. Each interval (point) $[i,j)$ corresponds to a cycle which appears at $i$ (birth) and dissapears at $j$ (death). The points $(i,j)$ far from the diagonal are those that persist for a long time.

Given a point cloud $(X,d)$, one can take, for each $\varepsilon>0$, $K_{\varepsilon}=\mathrm{VR}(X,\varepsilon)$. Recall that $\mathrm{VR}(X,\varepsilon)$ is the Vietoris-Rips complex of $X$ with parameter $\varepsilon$, whose simplices are the finite subsets $\{x_0,\ldots,x_r\}\subseteq X$ with $d(x_k,x_l)\leq \varepsilon$. Note that, if $\varepsilon\leq \varepsilon'$ then $K_{\varepsilon}\subseteq K_{\varepsilon'}$, so the Vietoris-Rips complexes form a filtration, which is finite (since $X$ is finite). The persistent homology of the sample $X$ is the PH of its associated Vietoris-Rips filtration.

\subsection*{Distances between persistence diagrams}

To compare persistence diagrams, we will use two metrics, which are standard in topological data analysis. The \textit{bottleneck distance} between two persistence diagrams $D$ and $D'$ is defined as
\[
d_B(D, D') = \inf_{\gamma} \sup_{p \in D} \|p - \gamma(p)\|_\infty,
\]
where the infimum is taken over all matchings $\gamma: D \to D'$ (including the diagonal), and $\|\cdot\|_\infty$ denotes the supremum norm. The \textit{Wasserstein distance} of order $q$ is defined as
\[
W_q(D, D') = \left(\inf_{\gamma} \sum_{p \in D} \|p - \gamma(p)\|_\infty^q\right)^{1/q}
\]
In this article we will use $q=1$ (the $1$-Wasserstein distance). The bottleneck distance captures the maximum mismatch between paired features, making it particularly sensitive to the most significant differences. The Wasserstein distance accounts for all features cumulatively and is more sensitive to the overall distribution of the topological features.

\section{The $\delta$-core subsampling}\label{sectiondeltacore}

Given a point cloud $X$ endowed with a metric $d$, a parameter $\delta> 0$ and a point $x\in X$, we define the $\delta$-neighborhood of $x$ as the subset $\nd(x)=\{y\in X \ |\ d(x,y)\leq \delta\}$.

\begin{definition}
	Let $(X,d)$ be a finite metric space. Fix a parameter $\delta>0$. We say that a point $x\in X$ is dominated by a point $x'\neq x$ if $\nd(x)\subseteq \nd(x')$.
	\end{definition}
	
\begin{remark}
	Note that if $x$ is dominated by $x'$, then $d(x,x')\leq \delta$ and every point $y$ that is $\delta$-close to $x$ (i.e $d(y,x)\leq\delta$) is also $\delta$-close to $x'$.
\end{remark}

We define a $\delta$-core of $(X,d)$ similarly as we did for simplicial complexes.

\begin{definition}
	A $\delta$-core of a point cloud $(X,d)$ is a subsample with no dominated points.
\end{definition}

\subsection*{Construction of the $\delta$-core} Starting with a point cloud $X$ and a fixed parameter $\delta>0$, we look for dominated points. If $x_1\in X$ is a dominated point of $X$, we remove it from the sample and obtain a subsample $X-\{x_1\}$. We repeat this procedure until we reach a subsample without dominated points. In each step $i$ we remove a point $x_i$ that is dominated in $X-\{x_1,\ldots,x_{i-1}\}$ (with the metric $d$ inherited from $X$).

The $\delta$-core that we construct with this procedure depends, of course, on the points that we choose to remove. But, similarly as in the simplicial complex case, two different $\delta$-cores are equivalent in the following sense.

\begin{definition}
	Given a point cloud $(X,d)$ and a parameter $\delta$, we say that two subsamples $Y,Z\subseteq X$ are \textit{$\delta$-equivalent} if there is a biyection $\varphi:Y\to Z$ such that, for every $y,y'\in Y$, $d(y,y')\leq \delta$ if and only if $d(\varphi(y),\varphi(y')\leq\delta$.
 \end{definition}
 
\begin{thm}
Given $(X,d)$ and a parameter $\delta$, the $\delta$-core of $X$ is unique up to $\delta$-equivalence.
\end{thm}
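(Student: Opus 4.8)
The plan is to reduce this statement to the uniqueness of cores of simplicial complexes, Theorem~\ref{unique}, by passing to flag complexes. The point is that the $\delta$-structure on a subsample $Y\subseteq X$ is recorded faithfully by the Vietoris-Rips complex $\mathrm{VR}(Y,\delta)$: it is flag, and its $1$-skeleton (viewed as a graph) has $y$ adjacent to $y'$ precisely when $d(y,y')\le\delta$, so the closed neighbourhood $N[y]$ in this graph is exactly $\nd(y)$ (note $y\in\nd(y)$ since $d(y,y)=0\le\delta$). First I would record the two compatibilities that make the dictionary work. By Remark~\ref{domflag}, a point $y\in Y$ is $\delta$-dominated by $y'$ in $Y$ if and only if the vertex $y$ is dominated by $y'$ in $\mathrm{VR}(Y,\delta)$. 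And for $x\in Y$ one has $\mathrm{VR}(Y\smallsetminus\{x\},\delta)=\mathrm{VR}(Y,\delta)\smallsetminus x$, since both sides are the full subcomplex spanned by the vertices other than $x$ and the metric is the restricted one. Hence removing a $\delta$-dominated point from a subsample corresponds exactly to an elementary strong collapse of its Vietoris-Rips complex, and each intermediate complex stays flag (again Remark~\ref{domflag}) and, being a deletion, equals $\mathrm{VR}(Z,\delta)$ for $Z$ its vertex set.

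It then follows that $Y\mapsto\mathrm{VR}(Y,\delta)$ is a bijection between the $\delta$-cores of $X$ and the cores of $\mathrm{VR}(X,\delta)$, with inverse $L\mapsto V(L)$. I would then invoke Theorem~\ref{unique}: any two cores $\mathrm{VR}(Y,\delta)$ and $\mathrm{VR}(Z,\delta)$ of $\mathrm{VR}(X,\delta)$ are simplicially isomorphic. Since both are flag, a simplicial isomorphism between them is the same data as an isomorphism of their $1$-skeleta, i.e.\ a bijection $\varphi\colon Y\to Z$ with $\{y,y'\}\in\mathrm{VR}(Y,\delta)$ iff $\{\varphi(y),\varphi(y')\}\in\mathrm{VR}(Z,\delta)$; unwinding the description of the $1$-skeleton, this says exactly that $d(y,y')\le\delta$ iff $d(\varphi(y),\varphi(y'))\le\delta$ for all $y,y'\in Y$, which is precisely $\delta$-equivalence. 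This yields the claim.

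I expect the main point requiring care to be this last translation between the two uniqueness statements: one must make sure that the isomorphism produced by Theorem~\ref{unique} really is an isomorphism of $1$-skeleta (it is, being simplicial, using flagness in both directions) and that $\delta$-equivalence is the right metric-side shadow of simplicial isomorphism of flag complexes. An alternative, more self-contained route would be to imitate the proof of Theorem~\ref{unique} directly at the level of point clouds: if $x$ is $\delta$-dominated by $x'$, the retraction $r\colon X\to X\smallsetminus\{x\}$ fixing $X\smallsetminus\{x\}$ and sending $x\mapsto x'$ preserves $\delta$-closeness (if $d(x,y)\le\delta$ then $y\in\nd(x)\subseteq\nd(x')$, so $d(x',y)\le\delta$), and one then runs the usual confluence argument on sequences of such removals; there the bookkeeping would be the main obstacle, which is why I would favour the reduction above.
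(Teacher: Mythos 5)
Your proposal is correct and follows essentially the same route as the paper: pass to $K=\mathrm{VR}(X,\delta)$, use Remark~\ref{domflag} to identify $\delta$-dominated points with dominated vertices, note that deleting a point corresponds to deleting a vertex, and invoke Theorem~\ref{unique}. Your final step spelling out why a simplicial isomorphism of the (flag) cores restricts to a $\delta$-equivalence of their vertex sets is a slightly more explicit account of a translation the paper leaves implicit, but the argument is the same.
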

		
\begin{proof}
	Consider the Vietoris-Rips complex $K=\mathrm{VR}(X,\delta)$. Recall that the vertex set of $K$ is $X$ and a subset $\sigma=\{x_0,\ldots,x_r\}$ is a simplex of $K$ if and only if $d(x_i,x_j)\leq \delta$ for every  $i,j$. By Remark \ref{domflag}, $x$ is a dominated point of $X$ if and only if it is a dominated vertex of the simplicial complex $K$. Note also that $K\smallsetminus x=\mathrm{VR}(X-\{x\},\delta)$. This implies that a $\delta$-core of the point cloud $X$ corresponds to the set of vertices of a core of $K$. Now the results follows from Theorem	\ref{unique}.
	\end{proof}
	
In particular all possible $\delta$-cores of $X$ have the same cardinality and similar shape. See [Algorithm 1] below for the pseudo-code of the construction.

The computational cost of the $\delta$-core algorithm has two components. 
First, computing all $\delta$-neighborhoods (lines 5--7 in [Algorithm 1]) requires determining, 
for each point, which other points lie within distance $\delta$. A naive implementation 
evaluates all $O(n^2)$ pairwise distances. However, for low-dimensional Euclidean data (dimension $d \lesssim 20$) using spatial data structures 
such as KD-trees  or ball trees, this step 
can be performed in $O(n \log n + n\bar{k})$ time, where $\bar{k}$ is the average 
$\delta$-degree.

In the second phase (lines 10--22), the algorithm repeatedly removes dominated vertices. 
Checking whether $N(x)\subseteq N(y)$ for a pair of adjacent vertices requires 
$O(k_{\max})$ time, and each vertex compares itself to at most $k_{\max}$ 
neighbors. Therefore one full sweep costs $O(n k_{\max}^2)$. 
In the worst case the algorithm may perform $O(n)$ sweeps, leading to a 
theoretical upper bound $O(n^2 k_{\max}^2)$. 
However, for geometric point clouds in low dimensions, $k_{\max}$ is typically 
small and many vertices are removed simultaneously, so the number of sweeps is 
small in practice. With efficient neighborhood computation, the empirical 
running time is close to $O(n \log n)$.

This preprocessing step can significantly reduce the size of the 
Vietoris-Rips complex. Persistent homology via matrix reduction has 
complexity $O(s^3)$, where $s$ is the number of simplices. 
For $\mathrm{VR}(X,\varepsilon)$ up to dimension $d$, the worst-case number of 
simplices is $s = O(n^{d+1})$. 
If the core preserves only a fraction $\alpha < 1$ of the vertices, the number 
of simplices scales as $\alpha^{d+1}$, and the complexity improves from
$O(s^3)$ to $O\big((\alpha^{d+1} s)^3\big).$

\begin{algorithm}[H]
	\caption{$\delta$-Core$(X)$}
	\begin{algorithmic}[1] 
		\State \textbf{Input:} Point cloud $X=\{x_1,\dots,x_n\}$, parameter $\delta>0$
		\State \textbf{Output:} The $\delta$-core
		\State
		\State // Precompute $\delta$-neighborhoods 
		\For{$i=1$ to $n$}
		\State $N_\delta(x_i) \gets \{\, j \mid \mathrm{dist}(x_i,x_j)\le \delta\,\}$
		\EndFor
		\State
		\State $\mathrm{Active} \gets \{1,\dots,n\}$
		\Repeat
		\State $\mathrm{ToRemove} \gets \varnothing$
		\ForAll{$i\in\mathrm{Active}$}
		\If{$i\in\mathrm{ToRemove}$} \State \textbf{continue} \EndIf
		\State $N_i \gets N_\delta(x_i)\cap\mathrm{Active}$
		\ForAll{$j\in N_i$}
		\If{$j = i$} \State \textbf{continue} \EndIf
		\If{$j\in\mathrm{ToRemove}$} \State \textbf{continue} \EndIf
		\State $N_j \gets N_\delta(x_j)\cap\mathrm{Active}$
		\If{$N_i \subseteq N_j$} \Comment{$x_i$ is dominated by $x_j$}
		\State $\mathrm{ToRemove} \gets \mathrm{ToRemove}\cup\{i\}$
		\State \textbf{break} \Comment{stop checking other $j$ for this $i$}
		\EndIf
		\EndFor
		\EndFor
		\State $\mathrm{Active}\gets\mathrm{Active}\setminus\mathrm{ToRemove}$
		\Until{$\mathrm{ToRemove}=\varnothing$}
		\State \Return $\mathrm{Core}\gets\text{sorted}(\mathrm{Active})$
	\end{algorithmic}
\end{algorithm}

\subsection*{The parameter $\delta$} 

The choice of the parameter $\delta$ depends on the characteristics of the data. If $\delta$ is too large, the subsample size will be small and topological information may be lost. For instance, when $\delta$ approaches the diameter of the dataset, the core tends to a single point (the core of a simplex is a vertex). Conversely, if $\delta$ is too small, the reduction will be minimal and the computational gain negligible. 

For relatively uniform samples, a value around the 15th to 20th percentile of pairwise distances is typically appropriate, yielding substantial computational improvement, especially when the sample contains more than 1000 points and persistent homology is computed up to degree 2 or higher. For sparser or more heterogeneous samples, a lower percentile such as the 7th to 10th may be preferable to preserve topological features in regions of varying density. In practice, the optimal choice of $\delta$ can be determined by examining the reduction rate and comparing persistence diagrams at different percentile values.

Figure~\ref{fig:delta_cores} displays $\delta$-core subsamples for different values of $\delta$ on a heterogeneous sample of 1500 points in the unit cube $[0,1]^3 \subset \mathbb{R}^3$. In this particular case, as $\delta$ increases, the size of the $\delta$-core decreases. This is, in general, the typical behavior; however, depending on the structure of the sample, increasing $\delta$ may also lead to an increase in the subsample size. Of course, when $\delta$ is sufficiently large, the $\delta$-core consists of a single point, and when $\delta$ is very small, the $\delta$-core coincides with the entire sample.

\begin{figure}[htbp]
	\centering
	\includegraphics[width=\textwidth]{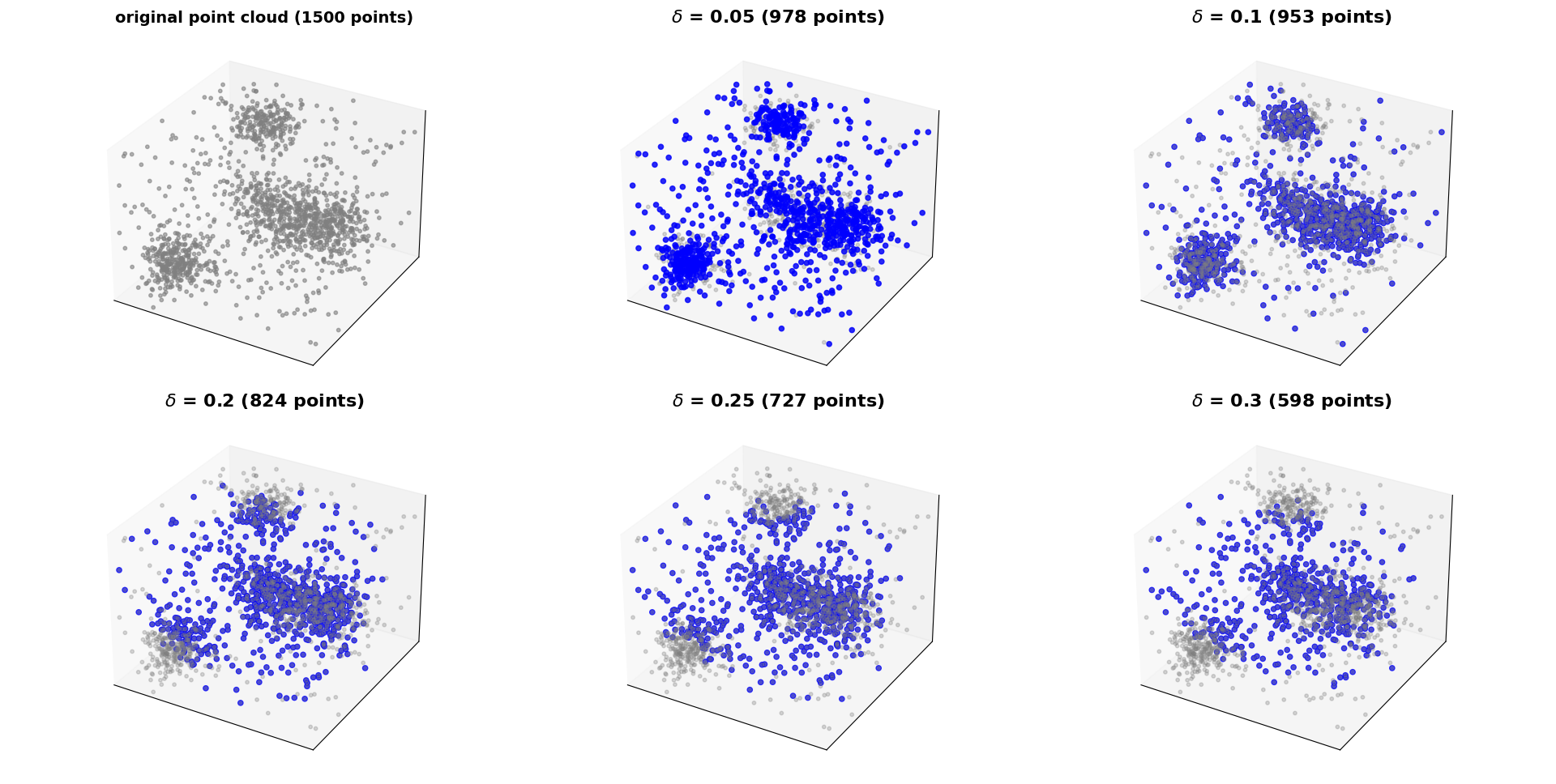}
	\caption{$\delta$-core subsamples for different values of $\delta$.}
	\label{fig:delta_cores}
\end{figure}

\section{Experimental results}\label{sectionexperiments}

We present experimental results on both synthetic and real datasets. First, we show two examples with synthetic data: the heterogeneous point cloud in the unit cube from the previous section, and a more homogeneous sample of 2000 points on a 2-dimensional torus in $\mathbb{R}^3$. We evaluate computation times and distances between persistence diagrams. We also contrast the $\delta$-core subsampling with FPS (maxmin) subsampling, using the same number of points, and benchmark our construction against the witness complex (with randomly chosen landmarks) Finally, we present two additional examples with real heterogeneous datasets and compare our method with FPS subsampling and a recently introduced outlier-robust subsampling method \cite{S23}. In all cases, our method achieves better approximations (both in bottleneck and Wasserstein distances).

We implement our algorithm in Python and use Ripser.py \cite{TSB18} for persistent homology and GUDHI \cite{GUDHI15} for witness complexes. All experiments are performed on a standard desktop machine equipped with an 11th Gen Intel Core i7 (2.50 GHz) CPU and 16 GB of RAM.

\subsection*{Points in $3$-dimensional cube} 

We first compare the heterogeneous sample of 1500 points shown in Figure~\ref{fig:delta_cores} using $\delta=0.05$ and $\delta=0.3$. Figure~\ref{fig:cube} shows the persistence diagrams computed with a threshold of 0.8. For $\delta=0.05$, the $\delta$-core contains 978 points and requires $0.36s$  for construction and $42.78s$  for persistence computation, achieving a $5.04\times$ speedup compared to the $217.55s$  required for the original sample. For $\delta=0.3$, the $\delta$-core contains 598 points and requires $10.09s$  for construction and $16.07s$  for persistence computation, achieving an $8.32\times$ speedup.

\begin{figure}[htbp]
	\centering
	\includegraphics[width=\textwidth]{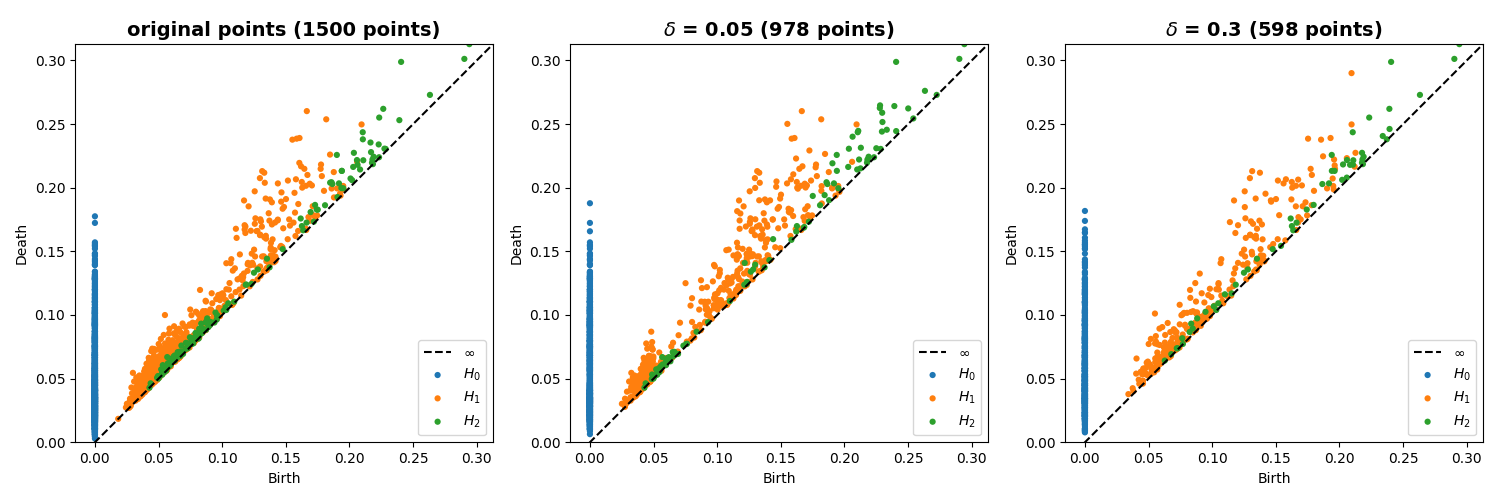}
	\caption{Points in the $3$-dimensional cube: Comparison of persistence diagrams. Original sample and $\delta$-cores with $\delta=0.05$ and $0.3$.}
	\label{fig:cube}
\end{figure}

We now compare, for the same point cloud, the $\delta$-core subsampling with $\delta=0.25$ against FPS (maxmin) subsampling using the same number of points. Figure~\ref{fig:cube-comparison} shows the persistence diagrams. Although the bottleneck distances between the original sample and the FPS subsample are similar to those between the original sample and the $\delta$-core subsample, the Wasserstein $L^1$ distances between the original sample and the $\delta$-core subsampling are smaller in all three dimensions. Table~\ref{tab:cube_wasserstein} shows the Wasserstein $L^1$ distances for $H_1$ and $H_2$ between the original sample and both subsampling methods.

\begin{figure}[htbp]
	\centering
	\includegraphics[width=\textwidth]{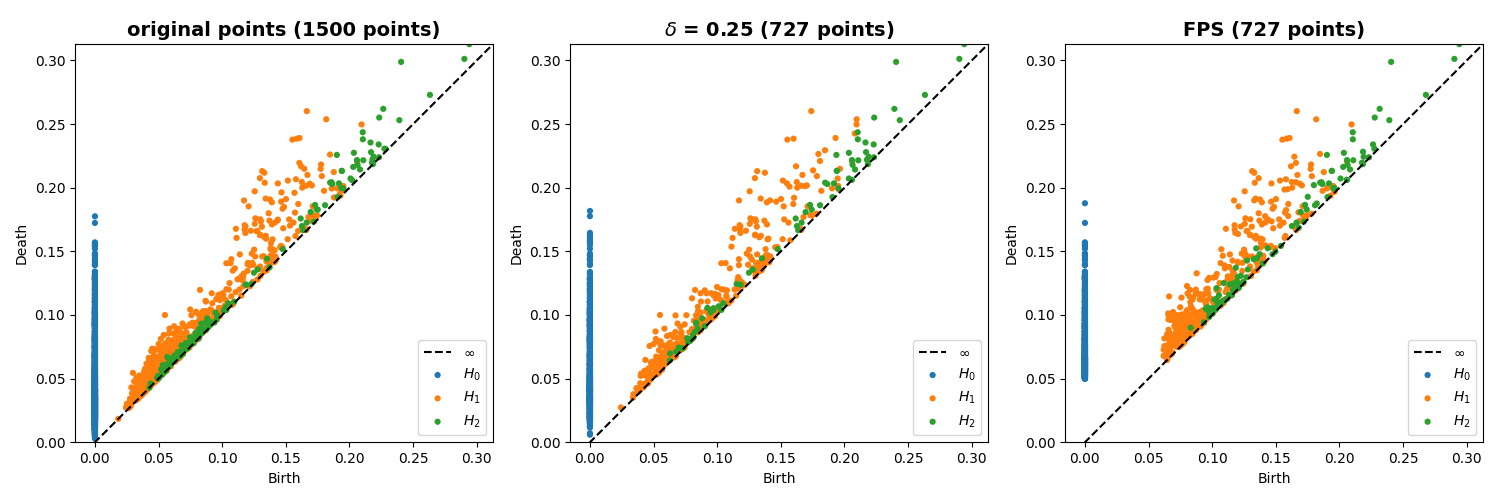}
	\caption{Points in the cube: Comparison of persistence diagrams. Original sample, $\delta$-core with $\delta=0.25$, and FPS.}
	\label{fig:cube-comparison}
\end{figure}

\begin{table}[htbp]
	\centering
	\begin{tabular}{lcc}
		\hline
		
		Method & $H_1$ & $H_2$ \\
		\hline
		$\delta$-core (727 points) & 2.569577 & 0.228622 \\
		FPS (727 points) & 3.436561 & 0.432619 \\
		\hline
	\end{tabular}
	\vspace{6pt}
	\caption{Points in the cube: Wasserstein $L^1$ distances between original sample and subsamples.}
	\label{tab:cube_wasserstein}
\end{table}

\subsection*{Torus in $\mathbb{R}^3$} 

We now consider a more homogeneous sample of 2000 points on a 2-dimensional torus in $\mathbb{R}^3$. The points are generated by sampling the angular parameters uniformly at random with additional Gaussian noise. We use $\delta=1.4$ and compute persistent homology up to degree 2 with Ripser using a threshold of 2. Figure~\ref{fig:torus_points} shows the original sample and the $\delta$-core subsampled set (in blue), which contains 1527 points. Figure~\ref{fig:torus_diagrams} shows the persistence diagrams. The bottleneck distances between the original sample and the subsample are 0.050543 in degree 1 and 0.035024 in degree 2.

\begin{figure}[htbp]
	\centering
	\includegraphics[width=\textwidth]{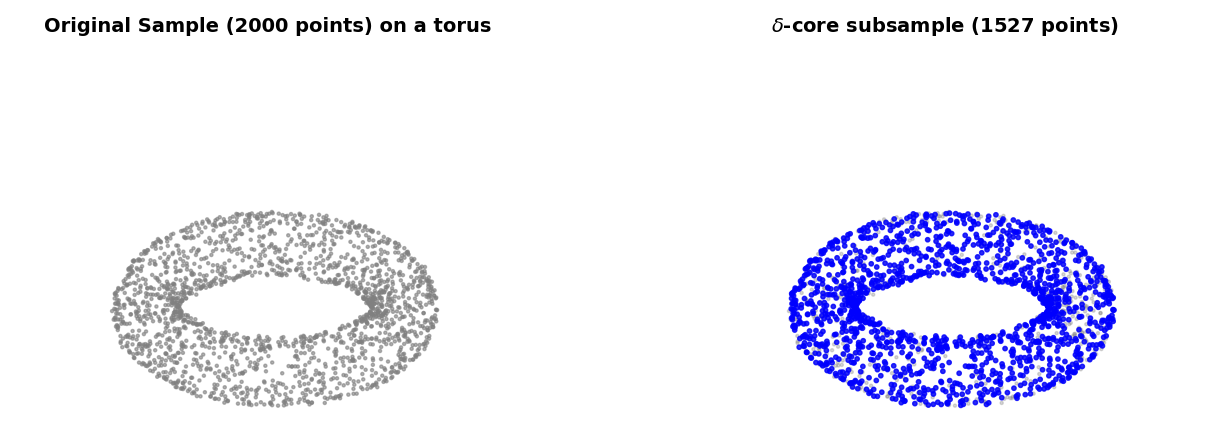}
	\caption{2000 points sampled on the 2-torus in $\mathbb{R}^3$ in grey. The $\delta$-core for $\delta=1.4$ in blue (1527 points).}
	\label{fig:torus_points}
\end{figure}

\begin{figure}[htbp]
	\centering
	\includegraphics[width=\textwidth]{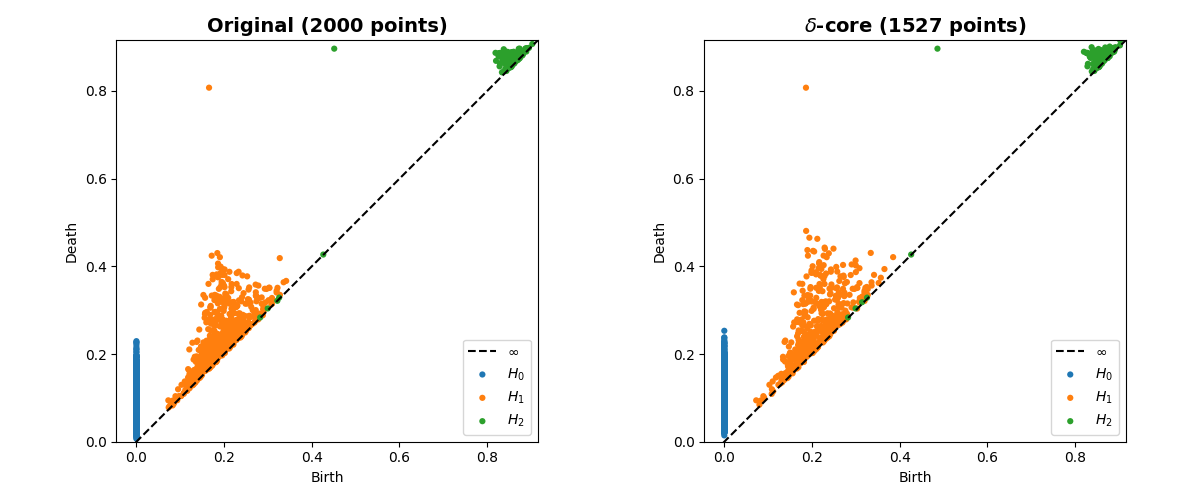}
	\caption{Torus: Comparison of persistence diagrams. Original sample and $\delta$-core with $\delta=1.4$.}
	\label{fig:torus_diagrams}
\end{figure}

We also try with a sample of 700 points on the same torus, now with $\delta=0.4$, obtaining a $\delta$-core with 436 points, and compare with the PH of the witness complex with a number of landmarks equal to the size of the $\delta$-core (randomly chosen from the sample). The witness complex is computed in Python with \texttt{GUDHI}, using the function \texttt{EuclideanWitnessComplex.create\_simplex\_tree} and the same filtration range as in the Vietoris-Rips computations. For both methods we compute persistence up to dimension 2 and compare their diagrams with those of the full dataset using bottleneck distances.

Table~\ref{tab:compare-dc-witness} shows the results. In this example the $\delta$-core gives slightly smaller bottleneck distances in both $H_1$ and $H_2$. The main contrast is in running time: the whole $\delta$-core pipeline (construction plus persistence) took about $0.25s$ , while the witness complex took over $1000s$  on the same machine. 

\begin{table}[h]
	\centering
	
	\label{tab:compare-dc-witness}
	\begin{tabular}{lcc}
		\hline
		Method & $H_1$ & $H_2$ \\
		\hline
		$\delta$-core (436 points) & $0.1814$ & $0.0889$ \\
		Witness (436 random landmarks) & $0.2605$ & $0.0965$ \\
		\hline
	\end{tabular}
	\vspace{6pt}
	\caption{Torus: bottleneck distances to the original persistence diagram.}
\end{table}

\subsection*{Breast Cancer Wisconsin Dataset}

We now analyze the Breast Cancer Wisconsin (Diagnostic) dataset from the UCI Machine Learning Repository, available through scikit-learn. This dataset contains 569 samples in $\mathbb{R}^{30}$, where each point corresponds to measurements computed from digitized fine-needle aspirate (FNA) images of breast masses collected at the University of Wisconsin Hospitals. The features describe various morphological characteristics of the cell nuclei present in the images. Before analysis, we standardize the data to have zero mean and unit variance.

We compare the $\delta$-core subsampling against FPS subsampling using $\delta$ set to the 10th percentile of pairwise distances. Figure~\ref{fig:breast_cancer} shows the persistence diagrams for the original sample and both subsampling methods, each using 429 points. Table~\ref{tab:breast_cancer_bottleneck} shows the bottleneck distances for $H_1$ and $H_2$ between the original sample and both subsamples. The $\delta$-core subsample achieves a better approximation of the original topological features, with bottleneck distances roughly half those of FPS in both dimensions.

\begin{figure}[htbp]
	\centering
	\includegraphics[width=\textwidth]{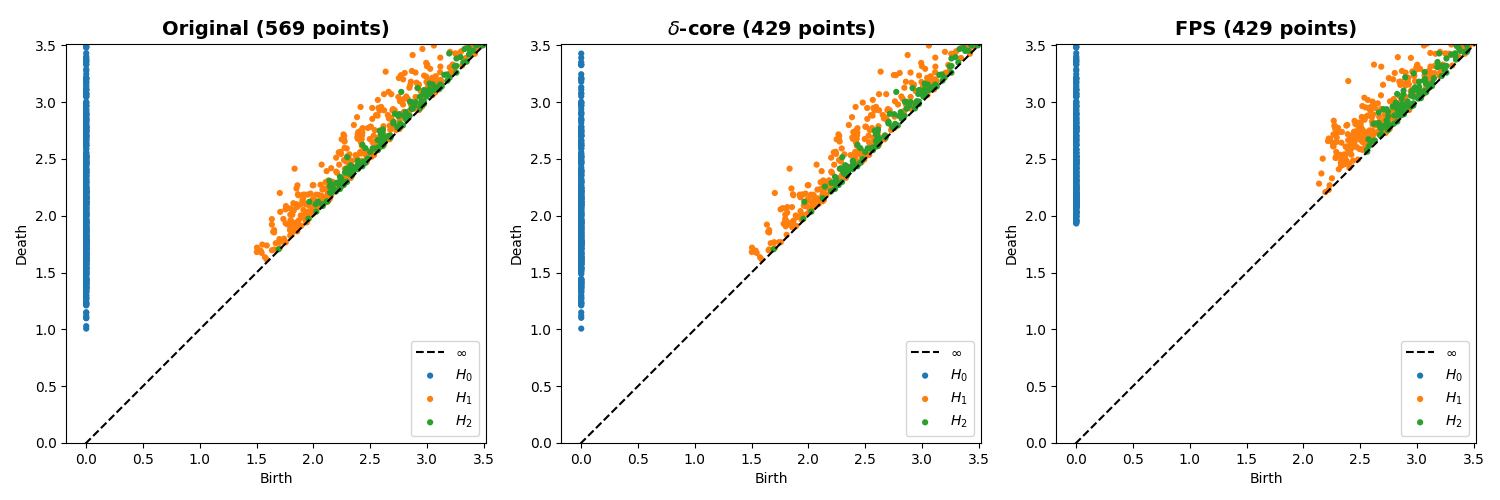}
	\caption{Breast cancer dataset: comparison of persistence diagrams. Original sample, $\delta$-core, and FPS.}
	\label{fig:breast_cancer}
\end{figure}

\begin{table}[htbp]
	\centering
	\begin{tabular}{lcc}
		\hline
		Method & $H_1$ & $H_2$ \\
		\hline
		$\delta$-core (429 points) & 0.106293 & 0.063485 \\
		FPS (429 points) & 0.290054 & 0.108560 \\
		\hline
	\end{tabular}
	\vspace{6pt}
	\caption{Breast cancer dataset: bottleneck distances between original sample and subsamples.}
	\label{tab:breast_cancer_bottleneck}
\end{table}
\subsection*{Head CT Medical Images Dataset}

We analyze head CT images from the HeadCT subset of the MedMNIST dataset \cite{medmnistv2}. MedMNIST provides standardized $64\times 64$ grayscale images derived from public medical imaging datasets, including CQ500 for the HeadCT class. We randomly select 1000 Head CT images from this collection. Each image, originally in $\mathbb{R}^{4096}$ (corresponding to $64\times 64$ pixels), is first standardized to have zero mean and unit variance, and then reduced to 40 dimensions using PCA, which simplifies the analysis while keeping the essential structure of the images.

We compare the $\delta$-core subsampling (using $\delta$ equal to the 10th percentile of pairwise distances) against FPS and the outlier-robust subsampling method of Stolz~\cite{S23}. The algorithm for Stolz's method was obtained from the author's GitHub repository (see \cite{S23}). Figure~\ref{fig:headct} shows the persistence diagrams for the original sample and the three subsampling methods, each using 712 points. Table~\ref{tab:headct_distances} shows both bottleneck and Wasserstein distances for $H_1$ and $H_2$ between the original sample and each subsample. As can be observed in the table, the $\delta$-core method achieves better approximations than both FPS and Stolz's method in all metrics. Regarding computational cost, we measure the construction time for each subsampling method plus the subsequent persistent homology computation on our machine. FPS is the fastest method ($0.19s$ total), followed by $\delta$-core ($8.12s$ total), while the outlier-robust method requires more time ($109.70s$ total). The $\delta$-core construction takes $7.45s$, while the persistent homology computation on the resulting subsample takes $0.67s$.

\bigskip

\begin{figure}[htbp]
	\centering
	\includegraphics[width=\textwidth]{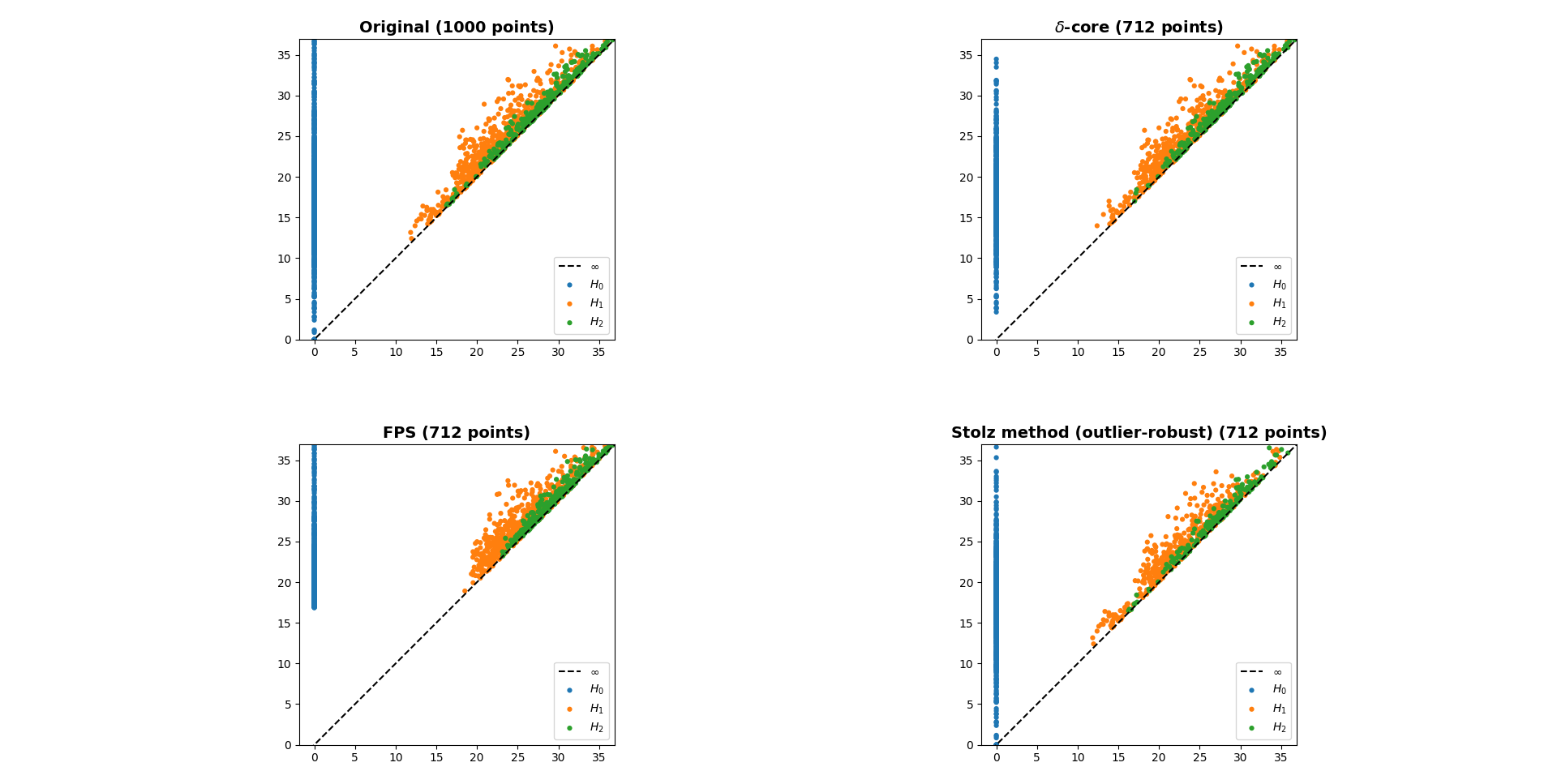}
	\caption{Head CT dataset: comparison of persistence diagrams.}
	\label{fig:headct}
\end{figure}

\begin{table}[htbp]
	\centering
	\begin{tabular}{lcccc}
		\hline
		& \multicolumn{2}{c}{Bottleneck} & \multicolumn{2}{c}{Wasserstein} \\
		Method & $H_1$ & $H_2$ & $H_1$ & $H_2$ \\
		\hline
		$\delta$-core (712 points) & 1.594166 & 0.655020 & 164.267939 & 18.350527 \\
		Outlier-robust (712 points) & 2.650137 & 1.086998 & 242.289767 & 66.919088 \\
		FPS (712 points) & 1.647661 & 1.038147 & 370.236394 & 81.843747 \\
		\hline
	\end{tabular}
	\vspace{6pt}
	\caption{Head CT dataset: bottleneck and Wasserstein distances between original sample and subsamples.}
	\label{tab:headct_distances}
\end{table}

\section{Conclusions}

We presented an effective subsampling method based on the topological reduction technique of strong collapse. We contrasted our construction with other known methods and observed that, in general, it produces more accurate subsamples and better approximates persistent homology, while maintaining a relatively low computational cost compared with other alternative methods.

\end{document}